\documentclass[12pt]{article}

\usepackage{graphicx}
\usepackage{float}
\usepackage{amsmath}
\usepackage{amssymb}
\usepackage{amsthm}
\usepackage{comment}
\usepackage{xcolor}
\usepackage{hyperref}
\hypersetup{colorlinks = false}
\usepackage[numbers,sort&compress]{natbib}
\usepackage[margin=0.75in]{geometry}
\usepackage{complexity}

\newcommand{\be}{\begin{equation}}
\newcommand{\ee}{\end{equation}}
\newcommand{\ba}{\begin{array}}
\newcommand{\ea}{\end{array}}
\newcommand{\bea}{\begin{eqnarray}}
\newcommand{\eea}{\end{eqnarray}}

\newcommand{\ra}{\rangle}
\newcommand{\la}{\langle}

\newcommand{\CC}{\mathbb{C}}
\newcommand{\DD}{\mathbb{D}}
\newcommand{\RR}{\mathbb{R}}

\newcommand{\ket}[1]{|#1\rangle}

\newcommand{\odisk}[1]{{\mathbb{D}_{{#1}}}}

\newcommand{\multi}[1]{\mathbf{{#1}}}
\newcommand{\rr}{\multi{r}}

\newtheorem{dfn}{Definition}

\newtheorem{lemma}{Lemma}
\newtheorem{fact}{Fact}

\newtheorem{corol}{Corollary}

\newtheorem{theorem}{Theorem}

\newcommand{\footremember}[2]{%
    \footnote{#2}
    \newcounter{#1}
    \setcounter{#1}{\value{footnote}}%
}
\newcommand{\footrecall}[1]{%
    \footnotemark[\value{#1}]%
}

\title{Efficient quantum algorithm for Heisenberg spin systems}

\author{Sergey Bravyi\footremember{ibm}{IBM Quantum, IBM T.J. Watson Research Center, Yorktown Heights NY 10598}
\and
David Gosset\footremember{iqc}{Department of Combinatorics and Optimization and Institute for Quantum Computing, University of Waterloo} \footremember{perimeter}{Perimeter Institute for Theoretical Physics}
\and
Yinchen Liu\footrecall{iqc} 
\and
Benjamin Wong \footrecall{iqc}
}

\date{}

\begin{document}
\maketitle
\begin{abstract}
We consider a broad family of Heisenberg-type quantum spin systems that were studied by Suzuki and Fisher in 1971. This family includes, for example, the Heisenberg antiferromagnet on any bipartite graph. The ground and Gibbs states of these models are \textit{Lee-Yang tensors} with radius $1$: they are associated with multilinear polynomials that possess an extraordinary zero-freeness property inside the unit polydisk in the complex plane. For each Hamiltonian in this family we show that the spectral gap between the first-excited and ground-state energies is lower bounded by $2\mu$, where $\mu$ is the magnetic field along the $Z$ direction. Using this result we obtain an efficient quantum adiabatic algorithm for the ground energy of any model in this family. The proof is based on a new inequality that relates the spectral gap of a positive semidefinite operator to its Lee-Yang radius---a quantitative strengthening of prior work of the authors that may find applications elsewhere.
\end{abstract}

Quantum many-body systems lie at the heart of physics, chemistry, and materials science, yet the computational difficulty of predicting their properties is a major obstacle to progress in these areas. This difficulty cannot be avoided in the worst case: computing the ground state energy is intractable, even for classical spin systems \cite{barahona1982computational}. More generally, preparing the ground state of a local quantum Hamiltonian is a QMA-hard quantum optimization problem~\cite{kitaev2002quantumcomp} and very unlikely to admit efficient classical or quantum algorithms in the worst case.

On the other hand there are a handful of special cases where fast classical algorithms have been found. These include free-fermion solvable models (e.g. \cite{kitaev2006anyons,fendley2019free}), those which are amenable to polynomial interpolation/series expansion, including those at sufficiently high temperature~\cite{harrow2020classical,bakshi2024high}, and certain local Hamiltonians which admit rigorous and efficient classical Monte Carlo algorithms~\cite{jerrum1993polynomial,bravyi2014monte,bravyi2016polynomial}. An actively debated question is whether quantum computers can achieve an exponential advantage over classical algorithms for ground state problems of practical relevance to chemistry and materials science~\cite{lee2023evaluating,yoshioka2024hunting}. This prospect has spurred recent advances in quantum Gibbs sampling \cite{chen2023efficient,gilyen2024quantum}, a technique that may ultimately provide strong candidates along these lines.

 In this work we show that the ground energy of any system from a broad family of ferromagnetic Heisenberg-type models can be computed efficiently via quantum adiabatic evolution. We can also efficiently prepare a state with energy that is inverse-polynomially close to the ground energy. We emphasize that prior to our work, there was no known efficient algorithm for this task except in certain restricted special cases \cite{bravyi2016polynomial}. Moreover, the family of Hamiltonians we consider includes models with a sign problem  that are not expected to be accessible using classical Monte Carlo techniques. In this sense our algorithmic results provide a fairly wide family of candidates for quantum advantage. It remains an open question whether or not these ground energy problems can be solved by fast classical  algorithms based on other techniques such as polynomial interpolation \cite{harrow2020classical}. Another question that we leave for future work is whether techniques such as quantum Gibbs sampling can efficiently compute thermal properties of these systems at a given nonzero temperature.
 
To describe our results, let $G=(V,E)$ be a graph with $n$ vertices.
We place a qubit at each vertex of $G$ and  consider an anisotropic Heisenberg model with external magnetic fields described  by a Hamiltonian
\be
\label{SFmodel1}
H=-\sum_{(i,j)\in E} H_{i,j} - \sum_{i\in V}\left(\mu^x_i X_i + \mu^y_i Y_i + \mu^z_i Z_i\right),
\ee
where
\be
\label{SFmodel2}
H_{i,j} =J^z_{i,j} Z_i Z_j + J^{xx}_{i,j} X_i X_j + J^{yy}_{i,j} Y_i Y_j + J^{xy}_{i,j} X_i Y_j + J^{yx}_{i,j} Y_i X_j.
\ee
Here $J$'s and $\mu$'s are real coefficients. Following the work of 
Suzuki and Fisher~\cite{sf71}, consider a special class of such Hamiltonians. 
\begin{dfn}
An $n$-qubit Hermitian operator $H$ is called a Suzuki-Fisher Hamiltonian 
if it has the form defined in Eqs.~(\ref{SFmodel1},\ref{SFmodel2}) 
where all $Z_iZ_j$ terms  are ferromagnetic and dominant such that
\be
\label{eq:ferronorm}
\left\| \left( \ba{cc}
J^{xx}_{i,j} & J^{xy}_{i,j} \\
J^{yx}_{i,j} & J^{yy}_{i,j} \\
\ea\right) \right\|\le J^z_{i,j}
\ee
for all $(i,j)\in E$ and all magnetic $Z$-fields are non-negative such that $\mu^z_i\ge0$ for all $i\in V$. Note that if $J_{ij}^{xy}=J_{i,j}^{yx}=0$ for all $i,j$ then Eq.~\eqref{eq:ferronorm} simplifies to $J^{z}_{i,j} \ge \max\{ |J^{xx}_{i,j}|, |J^{yy}_{i,j}|\}$.
Let $\mathsf{SF}(n)$ be the set of all $n$-qubit Suzuki-Fisher Hamiltonians.
\end{dfn}
The class $\mathsf{SF}(n)$ includes, up to additive constants, EPR Hamiltonians~\cite{king2023improved,ju2025improved}   in which every two-qubit interaction energetically favors the EPR state $(|00\ra+|11\ra)/\sqrt{2}$, as well as their phase-shifted variants~\cite{wong2026lee}.
It also includes, up to a local basis change, 
Heisenberg {\em anti-ferromagnetic} model on any bipartite graph~\cite{king2023improved},
e.g. the 2D or 3D cubic lattice.
Our main result is the following. 
\begin{theorem}
\label{thm:main}
Let $E_1\le E_2\le \cdots \le E_{2^n}$ be the eigenvalues of a Suzuki-Fisher Hamiltonian $H$. Then 
\be
\label{gap_lower}
E_2 - E_1 \ge  2\mu,
\ee
where $\mu=
\min_{i\in V} \mu^z_i$ is the smallest $Z$-field.
\end{theorem}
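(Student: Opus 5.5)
The plan is to deduce the gap bound from a quantitative Lee-Yang statement about the Gibbs operator $e^{-\beta H}$, and then let $\beta\to\infty$. Write $H=H_0-\sum_{i\in V}\mu^z_i Z_i$, where $H_0$ contains the two-qubit terms and the $X,Y$ fields. I will use the operator version of Lee-Yang zero-freeness: to an $n$-qubit operator $A$ one associates a multilinear polynomial in variables $z_1,\dots,z_n$, obtained by expanding $A$ in the computational basis and weighting each basis label by the corresponding monomial. Say that $A$ has Lee-Yang radius $R$ if this polynomial has no zeros in the open polydisk $\odisk{R}^n$. The proof has three steps.

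\textbf{Step 1 (Suzuki--Fisher input).} Following \cite{sf71}, I would show that for the zero-$Z$-field part, and for every $\beta\ge0$, the operator $e^{-\beta H_0}$ (and small-time factors $e^{-\beta H_0/N}$) has Lee-Yang radius $1$. The ingredients are the following.
\begin{itemize}
\item Each single-edge factor $e^{\beta (J^z Z_iZ_j + \cdots)/N}$ satisfies the two-variable zero-freeness condition exactly when the ferromagnetic dominance condition \eqref{eq:ferronorm} holds.
\item The $X,Y$ fields are harmless in the same way.
\item Products of such operators preserve radius $1$ by Asano contraction.
\item A Trotter limit passes the property to $e^{-\beta H_0}$, using Hurwitz's theorem to keep zeros out of the open disk.
\end{itemize}

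\textbf{Step 2 (fields enlarge the radius).} The factor $e^{\beta\mu^z_i Z_i/N}$ is diagonal. Multiplying by it rescales the variable $z_i$ by $e^{2\beta\mu^z_i/N}$, up to an overall positive constant, and so enlarges the zero-free radius in coordinate $i$ by that factor. Interleaving these factors with the $H_0$ factors in a Trotter product, and iterating the Asano contraction argument, should give that $e^{-\beta H}$ has Lee-Yang radius at least $e^{2\beta\mu}$, with $\mu=\min_i\mu^z_i$. The delicate point is that the field factors and the $H_0$ factors do not commute. I expect each field layer to multiply the guaranteed radius by $e^{2\beta\mu/N}$, while each $H_0$ layer preserves any radius $R\ge 1$ (not only $R=1$). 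This needs a small lemma: the Lee-Yang radius-$1$ operators act as contractions preserving zero-freeness in $\odisk{R}^n$ for $R\ge1$, which I would again derive from Asano contraction. After $N$ layers the radius is at least $e^{2\beta\mu}$.

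\textbf{Step 3 (radius controls the gap).} This is the new inequality and the main obstacle. Let $A\succeq 0$ have Lee-Yang radius $R\ge1$, and let $\lambda_1\ge\lambda_2$ be its two largest eigenvalues. The target is
\[
\lambda_2\le \lambda_1/R .
\]
Applied to $A=e^{-\beta H}$ with $R=e^{2\beta\mu}$, this gives $e^{-\beta(E_2-E_1)}\le e^{-2\beta\mu}$, which is \eqref{gap_lower}; no limit is even needed.

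My first attempt at the inequality goes as follows. Restrict the polynomial to the diagonal $z_1=\cdots=z_n=z$, or better to a suitable one-parameter family. Then use that $A^k$ also has radius $\ge R^k$, since the radius is multiplicative under products by the Step 2 argument. As $k\to\infty$, $A^k\approx\lambda_1^k|\psi_1\ra\la\psi_1|+\lambda_2^k|\psi_2\ra\la\psi_2|$. The goal is to show that the associated polynomial of the dominant two-term expression has zeros at modulus roughly $(\lambda_1/\lambda_2)^{1/?}$ relative to $R^k$. If $\lambda_2>\lambda_1/R$, a zero would then enter $\odisk{R^k}$ for large $k$, a contradiction.

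Making the zero location of a rank-two combination precise, uniformly in $n$, is where I expect the real work. One must use positivity of the ground state amplitudes (Perron--Frobenius in the Lee-Yang setting) to control the leading term. One must also exhibit a direction in $\CC^n$ in which the second eigenvector's polynomial genuinely competes with the first.
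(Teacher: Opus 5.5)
\textbf{Step 2 is false}, and with it your remark that no limit is needed. In the flattening picture ($2n$ row and column variables, common radius $r$), the sharp form of your Step 3 is $\lambda_2/\lambda_1\le r^{-2}$, so your $R$ corresponds to $r^2$. Step 2 therefore asserts $r^2\ge e^{2\beta\mu}$ for $e^{-\beta H}$. (If instead $R=r$, Step 2 already fails for $h=0$ below.)

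Take the one-qubit Suzuki--Fisher Hamiltonian $H=-\mu Z-hX$ with $\mu,h>0$. Set $\omega=\sqrt{\mu^2+h^2}$, $m=\mu/\omega\in(0,1)$ and $t=\tanh(\beta\omega)$. The flattening of $e^{-\beta H}$ is $a+b(z_1+z_2)+dz_1z_2$, where $a,b,d$ are real, $a/d=(1+mt)/(1-mt)$, and $ad-b^2=\det e^{-\beta H}=1$. Hence it vanishes at $z_1=z_2=(-b+i)/d$, a point of modulus $\sqrt{a/d}$. This gives $r^2\le(1+mt)/(1-mt)<e^{2\beta\mu}$ for every $\beta>0$, because strict concavity of $\tanh$ gives $m\tanh(\beta\omega)<\tanh(\beta\mu)$. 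Moreover $r^2$ stays bounded as $\beta\to\infty$.

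This is structural, not an artifact of the example: $e^{-\beta H}/\lambda_1$ converges to the ground-state projector, whose radius is a fixed finite number. Two of your ingredients fail for the same reason.
\begin{itemize}
\item The sub-lemma that an $LY(1)$ layer preserves every radius $R\ge1$ is false. The contraction fact certifies only the radii of the uncontracted indices, so the free indices of the $LY(1)$ factor are certified at radius $1$ and nothing grows.
\item Radius is not multiplicative under powers: $(e^{-\beta H})^k=e^{-k\beta H}$ has radius bounded in $k$. So your sketch of Step 3 also starts from a false premise.
\end{itemize}
The paper sidesteps this by using a single symmetrized layer $A_\beta=e^{\beta M/2}e^{-\beta H_0}e^{\beta M/2}$, with $M=\mu\sum_iZ_i$ and $H_0=H+M\in\mathsf{SF}(n)$. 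Its radius $e^{\beta\mu}$ in each of the $2n$ variables is a pure rescaling of $e^{-\beta H_0}\in LY(1)$. The gap is then read off from $A_\beta=I-\beta H+O(\beta^2)$ as $\beta\to0$, the opposite limit from the one you propose.

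\textbf{Step 3 is not proved.} It is the real content of the result: the paper's relative-gap theorem, $\lambda_2/\lambda_1\le r^{-2}$ for positive semidefinite $K\in LY(r)$ with $r>1$. Perron--Frobenius positivity is not available, since the class includes $Y$-fields and $XY$-couplings with genuine sign problems. Its role is played by the zero-freeness of $f_\psi$ on $\DD_r^n$. The missing ideas are these.
\begin{itemize}
\item Construct the top eigenvector as $\psi=\lim_m K^m|0^n\rangle/\lambda_*^m$. Then $\psi\in LY(r)$ by closure under contraction and under limits.
\item For an eigenvector $\phi$ independent of $\psi$, with $\alpha=\lambda_\phi/\lambda_*$, study the holomorphic ratio $g=f_\phi/f_\psi$ on $\DD_r^n$.
\item Prove the squeezing property $\alpha g(v)\in g(\DD_s^n)$ for $s\in(1/r,r]$. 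Otherwise $\varphi=\psi-\phi/(\alpha g(v))$ lies in $LY(s)$, so $K\varphi\in LY(r)$ by Asano contraction since $rs>1$. Yet $f_{K\varphi}(v)=0$, a contradiction.
\item Combine the resulting diameter bound $\alpha\Delta_t\le\Delta_s$ with the Schwarz-lemma bound $\Delta_s\le(s/t)\Delta_t$. Then let $s\downarrow1/r$ and $t\uparrow r$.
\end{itemize}
This argument also shows that $\lambda_*=\lambda_1$ and that it is non-degenerate.
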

In particular, if all $Z$-fields $\mu^z_i$ 
are positive then the ground state of $H$ is non-degenerate
and  separated from excited states by an energy gap of at least $2\mu$. The lower bound in Eq.~(\ref{gap_lower})  is tight since it is saturated
for a pure $Z$-field Hamiltonian, e.g. $H=-\mu\sum_{i\in V} Z_i$. 

Theorem~\ref{thm:main} immediately gives an efficient quantum algorithm for approximating the ground energy
of any Suzuki-Fisher Hamiltonian on an arbitrary graph\footnote{We also note that it resolves Conjecture 3 from~\cite{wong2026lee} up to a constant factor.}.
\begin{corol}
\label{cor:qalg}
There exists a quantum algorithm with runtime 
$poly(n,J,\epsilon^{-1})$
that takes as input a Hamiltonian $H\in \mathsf{SF}(n)$ such that all Pauli coefficients of $H$ have magnitude at most $J$
and estimates the ground energy of $H$ within additive error $\epsilon$.
\end{corol}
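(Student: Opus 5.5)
The plan is to use Theorem~\ref{thm:main} to run an adiabatic path from a trivial Suzuki-Fisher Hamiltonian to the target $H$. The key point is that the whole path stays inside $\mathsf{SF}(n)$ with a uniformly positive $Z$-field, so the gap never closes. The difficulty is that the target $H$ may have $\mu=\min_i\mu^z_i=0$, in which case Theorem~\ref{thm:main} gives no gap for $H$ itself. I therefore first perturb: set $H' = H - \delta\sum_{i\in V} Z_i$ with $\delta=\epsilon/(2n)$. Then $H'\in\mathsf{SF}(n)$, every $Z$-field of $H'$ is at least $\delta$, and by Weyl's inequality the ground energy of $H'$ lies within $n\delta=\epsilon/2$ of that of $H$. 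It therefore suffices to estimate the ground energy of $H'$ to accuracy $\epsilon/2$.

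Next I define the interpolation $H(s) = -\delta\sum_i Z_i + s\,(H'+\delta\sum_i Z_i)$ for $s\in[0,1]$. Equivalently, $H(s)$ scales every coupling $J$ and every field $\mu^x,\mu^y$ by $s$, and has $Z$-field $\delta + s(\mu^z_i)\ge\delta$. The condition \eqref{eq:ferronorm} is homogeneous in the couplings, so it is preserved under scaling by $s\ge0$. Hence $H(s)\in\mathsf{SF}(n)$ for all $s$, and Theorem~\ref{thm:main} gives a gap of at least $2\delta=\epsilon/n$ along the entire path. The initial Hamiltonian $H(0)$ has the product ground state $\ket{0}^{\otimes n}$, which is easy to prepare. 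Also $\|dH/ds\|\le \mathrm{poly}(n)J$, since the graph has $O(n^2)$ edges and each Pauli coefficient is at most $J$. The standard adiabatic theorem, in the rigorous form with runtime $\mathrm{poly}(\|\dot H\|,\|\ddot H\|,1/\mathrm{gap},1/\eta)$ (or the Jansen--Ruskai--Seiler bound), then prepares the ground state of $H'$ to fidelity $1-\eta$. It needs evolution time $T=\mathrm{poly}(n,J,\epsilon^{-1},\eta^{-1})$, and the evolution can be simulated by Trotterization at polynomial cost. Alternatively, one can discretize $s$ and use Zeno-type phase-estimation projections, which avoids analytic adiabatic bounds.

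Finally, I estimate the energy of $H'$. Either I measure $H'$ term by term on many copies, giving error $\epsilon/4$ with $\mathrm{poly}(n,J,\epsilon^{-1})$ samples since $\|H'\|\le\mathrm{poly}(n)J$, or I run phase estimation. For the term-by-term route, a $1-\eta$ fidelity state has energy within $2\eta\|H'\|$ of the ground energy, so taking $\eta=\epsilon/(8\|H'\|)$ keeps the total error below $\epsilon$.

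The only step I expect to need care is the gap: Theorem~\ref{thm:main} must be applied with a strictly positive minimum field, which is why the $\delta$-shift is essential. The remaining concern is bookkeeping, namely that the adiabatic runtime depends only polynomially on $1/\delta$, which holds because the gap bound $2\delta$ is uniform in $s$.
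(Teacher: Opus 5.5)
Your proposal is correct and follows the paper's proof: the same shift $\delta=\epsilon/(2n)$, Theorem~\ref{thm:main} giving a uniform gap $2\delta$ along a linear path inside $\mathsf{SF}(n)$, and Weyl's inequality to absorb the $n\delta=\epsilon/2$ error. The differences are cosmetic. The paper starts from $-(1+\delta)\sum_i Z_i$ and ramps the field down to $\delta$, while you hold it at $\delta$ throughout. You also spell out the fidelity-to-energy error bookkeeping, which the paper leaves to the cited adiabatic literature.
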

\begin{proof}
Let $\delta=\epsilon/(2n)$ and consider an adiabatic path
\[
H(s)= sH - (1+\delta -s)\sum_{i\in V} Z_i, \qquad s\in [0,1].
\]
Let $E_1(s)$ be the ground energy  of $H(s)$.
We have $H(s)\in \mathsf{SF}(n)$ for all $s \in [0,1]$
since the class $\mathsf{SF}(n)$ is closed under 
multiplication by a positive scalar and addition of a positive $Z$-field.
The Hamiltonian $H(s)$ has positive $Z$-fields of magnitude at least $\mu(s)\ge 1+\delta-s\ge \delta$.
By Theorem~\ref{thm:main}, the minimum energy gap along the chosen adiabatic path is at least $2\delta$.
Since $H(0)$ has an easy-to-prepare ground state $|0^n\ra$,
quantum adiabatic evolution algorithm  followed by measuring the energy of $H(1)$
outputs  an estimate $E_{QA}$
satisfying $|E_{QA}-E_1(1)|\le \epsilon/2$
 in time $poly(n,J,\epsilon^{-1},\delta^{-1})$, see~\cite{benseny2021adiabatic} and references therein. 
By Weyl's inequality, $|E_1-E_1(1)|\le \|H-H(1)\| =\delta n\le \epsilon/2$. 
Thus $|E_1-E_{QA}|\le \epsilon$.
\end{proof}

The problem of computing the ground energy of the antiferromagnetic Heisenberg model (also known as Quantum MaxCut) has been studied extensively ~\cite{gharibian2019apx,parekh2021laserre,king2023apxqmc,lee2022optimizingquantumcircuitparameters,lee2024improvedquantummaxcut,jorquera2025monogamyentanglementboundsimproved,apte2025improvedalgorithmsquantummaxcut,apte2025conjecturedbounds2localhamiltonians}. Its complexity on \textit{bipartite} graphs was identified as a challenge problem in \cite{gharibian2024guest}, see recent developments in Refs.~\cite{king2023apxqmc,jorquera2025monogamyentanglementboundsimproved,ju2025improved,apte20250,apte2025conjecturedbounds2localhamiltonians,apte2025improvedalgorithmsquantummaxcut}.
As a special case of Corollary \ref{cor:qalg}, we obtain a polynomial-time quantum algorithm for Quantum MaxCut on bipartite graphs. The corresponding Hamiltonian has the form 
\[
H_{QMC} = -\sum_{(i,j)\in E} J_{ij}  |\Psi^-\ra\la \Psi^-|_{i,j}
\]
where $|\Psi^-\ra\la \Psi^-|$   is the projector onto the singlet state $|\Psi^-\ra = (|01\ra-|10\ra)/\sqrt{2}$
and $J_{ij}>0$ are positive edge weights.
Suppose $V=AB$ such that every graph edge connects $A$ and $B$. Let $Y_A=\prod_{i\in A} Y_i$
and $H=Y_A H_{QMC} Y_A$. One can easily check that $H$ is a Suzuki-Fisher Hamiltonian (up to an overall energy shift).
Furthermore, 
eigenvalues of $H$ coincide with those of $H_{QMC}$. Thus we obtain the following.
\begin{corol}
\label{corol:2}
There exists a quantum algorithm with the runtime $poly(n,J,\epsilon^{-1})$ approximating the ground energy of the Quantum 
MaxCut Hamiltonian  on any bipartite graph with $n$ vertices within additive error $\epsilon$.
Here $J=\max_{(i,j)\in E} J_{ij}$ is the maximum edge weight. 
\end{corol}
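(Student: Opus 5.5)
The plan is to reduce Corollary~\ref{corol:2} directly to Corollary~\ref{cor:qalg}. I will check explicitly that the conjugated operator $H=Y_AH_{QMC}Y_A$ is, after removing an explicit constant, a member of $\mathsf{SF}(n)$ whose Pauli coefficients are bounded by $J$.

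\textbf{Step 1: rewrite the singlet projector in Pauli form.} Start from
\[
|\Psi^-\ra\la\Psi^-| = \tfrac14\left(I - X\otimes X - Y\otimes Y - Z\otimes Z\right).
\]
Because the graph is bipartite with respect to $V=AB$, every edge $(i,j)$ has exactly one endpoint in $A$. Conjugation by $Y_A$ therefore applies $Y$ to exactly one qubit of each edge term. On that qubit it acts as $YXY=-X$, $YYY=Y$ and $YZY=-Z$. Hence the edge term becomes $\tfrac14(I + X_iX_j - Y_iY_j + Z_iZ_j)$, and
\[
H = -\sum_{(i,j)\in E}\frac{J_{ij}}{4}\, I \;-\; \sum_{(i,j)\in E}\frac{J_{ij}}{4}\left(Z_iZ_j + X_iX_j - Y_iY_j\right).
\]

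\textbf{Step 2: verify the Suzuki--Fisher conditions.} Set $H' = H + c\,I$ with $c=\sum_{(i,j)\in E} J_{ij}/4$. Then $H'$ has the form of Eqs.~(\ref{SFmodel1},\ref{SFmodel2}) with the following coefficients:
\begin{itemize}
\item $J^z_{i,j}=J_{ij}/4$, $J^{xx}_{i,j}=J_{ij}/4$ and $J^{yy}_{i,j}=-J_{ij}/4$;
\item $J^{xy}_{i,j}=J^{yx}_{i,j}=0$;
\item all fields are zero.
\end{itemize}
Since the off-diagonal couplings vanish, condition~\eqref{eq:ferronorm} reduces to $J^z_{i,j}\ge \max\{|J^{xx}_{i,j}|,|J^{yy}_{i,j}|\}$. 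This holds with equality. The fields satisfy $\mu^z_i=0\ge 0$, so $H'\in\mathsf{SF}(n)$. Every Pauli coefficient of $H'$ has magnitude $J_{ij}/4\le J$.

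Zero $Z$-fields cause no difficulty. Corollary~\ref{cor:qalg} does not require $\mu>0$, because its adiabatic path itself adds a positive field of size at least $\delta$.

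\textbf{Step 3: apply Corollary~\ref{cor:qalg} and undo the shift.} Conjugation by the unitary $Y_A$ preserves the spectrum. Adding $cI$ shifts every eigenvalue by exactly $c$. So the ground energy of $H_{QMC}$ equals that of $H'$ minus $c$. The constant $c$ is computed exactly from the input weights.

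Running the algorithm of Corollary~\ref{cor:qalg} on $H'$ yields an $\epsilon$-accurate estimate of its ground energy in time $poly(n,J,\epsilon^{-1})$. Subtracting $c$ then gives an $\epsilon$-accurate estimate of the ground energy of $H_{QMC}$.

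The only step needing care is the sign bookkeeping in Step 1. The bipartiteness assumption is used precisely to guarantee a single $Y$ per edge. This is what turns the antiferromagnetic $ZZ$ coupling into a ferromagnetic one that dominates, as Eq.~\eqref{eq:ferronorm} requires.
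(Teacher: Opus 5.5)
Your proposal is correct and follows the paper's argument exactly. You use bipartiteness to conjugate by $Y_A$, observe that the result is, up to the constant shift $\sum_{(i,j)\in E} J_{ij}/4$, a Suzuki--Fisher Hamiltonian with $J^z_{i,j}=J^{xx}_{i,j}=-J^{yy}_{i,j}=J_{ij}/4$ and zero fields, and then invoke Corollary~\ref{cor:qalg}; your explicit sign bookkeeping and the remark that zero $Z$-fields are harmless simply fill in what the paper leaves as ``one can easily check.''
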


In Section~\ref{sec:relative} we prove Theorem~\ref{thm:main} using the machinery of Lee-Yang tensors introduced in~\cite{wong2026lee} and reviewed below.

\section{Lee-Yang tensors \label{sec:lyt}}

Suppose $\psi \in (\CC^2)^{\otimes n}$ is  a complex tensor with $n$ binary indices
and entries $\psi_x$ labelled by bit strings $x\in \{0,1\}^n$. 
Define a generating polynomial $f_\psi \, : \, \CC^n \to \CC$ as 
\be\label{eq:lee-yang-poly}
f_\psi(z) = \sum_{x \in \{0,1\}^n} \psi_x z^x, \qquad z^x= \prod_{j=1}^n z_j^{x_j}.
\ee
Here $z=(z_1,\ldots,z_n)$ is a vector of complex variables.
Equivalently, $f_\psi(z)$ is the inner product between an $n$-qubit state $|\psi\ra=\sum_{x\in \{0,1\}^n}\; \psi_x |x\ra$ and a tensor product of single-qubit states $|0\ra + z_a^*|1\ra$
with $a=1,\ldots,n$. In the special case $n=0$ the tensor $\psi \in \CC$ is a scalar and we set $f_\psi=\psi\in \CC$.
Suppose  $r>0$ is a real number. Let
\[
\odisk{r} = \{z \in \CC\, : \, |z|<r\} 
\]
be the open zero-centered disk of radius $r$.
 If $\rr =(r_1,\ldots,r_n)\in \RR_{>0}^n$ is a tuple of radii, let
\[
\odisk{\rr} = \odisk{r_1} \times \cdots \times  \odisk{r_n}
\]
be the corresponding polydisk in $\CC^n$.
\begin{dfn}
We say that $\psi \in (\CC^2)^{\otimes n}$ is a Lee-Yang tensor with radius   $\rr \in \RR_{>0}^n$ if the generating polynomial of $\psi$ has no zeros
in the open zero-centered  polydisk  of radius $\rr$, that is,
\be
f_\psi(z)\ne 0 \quad \mbox{for all $z\in \odisk{\rr}$}.
\ee
A scalar $\psi\in\CC$ is a Lee-Yang tensor iff $\psi\ne 0$. 
\end{dfn}
Let $LY_n(\rr)$ be the set of all Lee-Yang tensors $\psi\in (\CC^2)^{\otimes n}$  with radius $\rr\in \RR_{>0}^n$.  
It follows directly from the definition that $LY_n(\rr)$   is closed under multiplication by nonzero scalars.
It is also closed
under tensor products in the sense
that $\psi \in LY_n(\rr)$ and $\psi'\in LY_{m}(\rr')$ imply $\psi \otimes \psi' \in LY_{n+m}((\rr,\rr'))$.
The following facts are  rephrasings of results from Refs.~\cite{asano1970theorems,ruelle2010characterization}.
\begin{fact}[\bf Closure under contraction]
\label{fact:contraction}
Consider any tensor $\psi \in LY_n(\rr)$ with $n\ge 2$. Let $\phi\in (\CC^2)^{\otimes (n-2)}$ be a tensor obtained from $\psi$ by contracting some pair of indices $i<j$
such that $r_i r_j>1$. Then $\phi \in LY_{n-2}(\rr')$ where $\rr'$ is obtained from $\rr$ by deleting the components $r_i$ and $r_j$. 
\end{fact}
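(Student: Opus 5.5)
I will prove Fact~\ref{fact:contraction} by reducing it to a two-variable statement about bilinear polynomials (the Asano contraction lemma) and proving that statement with the Schwarz--Pick lemma.

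\textbf{Reduction.} Contracting indices $i<j$ means setting $\phi_y=\sum_{a\in\{0,1\}}\psi_{x(y,a)}$, where $x(y,a)$ places $a$ in both positions $i$ and $j$ and $y$ in the remaining positions. Write $w$ for the remaining $n-2$ variables and separate the dependence on $z_i,z_j$:
\[
f_\psi(z)=A(w)+B(w)z_i+C(w)z_j+D(w)z_iz_j .
\]
Then $f_\phi(w)=A(w)+D(w)$. Fix any $w\in\odisk{\rr'}$ and write $A,B,C,D$ for the numbers $A(w),B(w),C(w),D(w)$. It suffices to show $A+D\neq 0$.

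Rescale $u=z_i/r_i$ and $v=z_j/r_j$. The polynomial becomes
\[
g(u,v)=A+B'u+C'v+D'uv,
\]
with $B'=r_iB$, $C'=r_jC$ and $D'=r_ir_jD$, and $g$ has no zeros on the open unit bidisk. I will show $|D'|\le|A|$ and $A\neq0$. Then $|D|\le |A|/(r_ir_j)<|A|$ because $r_ir_j>1$, so $A+D\neq0$. The case $n=2$ is included, since then $\phi=A+D$ is a scalar.

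\textbf{Elementary bounds.} First, $A=g(0,0)\neq0$. For fixed $|u|<1$, the map $v\mapsto (A+B'u)+(C'+D'u)v$ is affine and has no zero in $|v|<1$. This gives two facts:
\begin{itemize}
\item $A+B'u\neq 0$, by taking $v=0$;
\item $|C'+D'u|\le|A+B'u|$, since otherwise $v=-(A+B'u)/(C'+D'u)$ would be a zero with $|v|<1$.
\end{itemize}
Taking $u=0$ gives $|C'|\le|A|$, and by the symmetric argument $|B'|\le|A|$.

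\textbf{Key step: Schwarz--Pick.} Define $h(u)=(C'+D'u)/(A+B'u)$. It is analytic on the unit disk, since the denominator does not vanish there, and it satisfies $|h|\le1$. The Schwarz--Pick lemma gives $|h'(0)|\le 1-|h(0)|^2$. Computing,
\[
h(0)=C'/A, \qquad h'(0)=\frac{D'A-C'B'}{A^2},
\]
so the inequality becomes
\[
|D'A-C'B'|\le |A|^2-|C'|^2 .
\]
Exchanging the roles of $u$ and $v$ gives the same bound with $|B'|^2$ in place of $|C'|^2$. Hence
\[
|D'A-C'B'|\le |A|^2-\max(|B'|,|C'|)^2\le |A|^2-|B'||C'| .
\]
By the triangle inequality, $|D'||A|\le |A|^2-|B'||C'|+|B'||C'|=|A|^2$, so $|D'|\le|A|$ as required.

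The main obstacle is this last step. The crude bounds $|B'|,|C'|\le|A|$ alone do not control $D'$. One has to combine the derivative estimate from Schwarz--Pick in both variables so that the cross term $|B'||C'|$ cancels exactly.
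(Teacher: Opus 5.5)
Your proof is correct, and it takes a somewhat different route from the one the paper relies on.

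The paper gives no proof of this fact. It presents it as a rephrasing of Asano's contraction lemma (in Ruelle's formulation) and defers to the literature and the authors' earlier paper. Your reduction is exactly that lemma. You freeze the spectator variables, note $f_\phi(w)=A(w)+D(w)$, and show $r_ir_j|D(w)|\le|A(w)|$ with $A(w)\neq0$. This is the same as saying $A(w)+D(w)z\neq 0$ for $|z|<r_ir_j$, which is Asano's conclusion, evaluated at $z=1$.

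What is genuinely yours is the proof of the two-variable lemma. You apply Schwarz--Pick to $u\mapsto (C'+D'u)/(A+B'u)$ and to its mirror image in $v$, so that both $|A|^2-|B'|^2$ and $|A|^2-|C'|^2$ bound $|D'A-C'B'|$. The cross term $|B'||C'|$ then cancels in the triangle inequality.

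The classical Asano--Ruelle argument is more general. It works with arbitrary closed excluded sets $K_1,K_2$ rather than disks, and concludes that $A+Dz$ can vanish only at points $z=-uv$ with $u\in K_1$, $v\in K_2$. Yours is tailored to polydisks, which is all that is needed here. In exchange it is short and self-contained, and it yields the sharp inequality $|D'|\le|A|$ directly. It also uses the same tool, the Schwarz lemma, that drives the proof of Theorem~\ref{thm:relative_gap}, which makes the paper more uniform if included.

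One small detail to tidy. Schwarz--Pick is normally stated for maps into the open disk, while you only know $|h|\le 1$. If $|h|$ reaches $1$ somewhere inside the disk, the maximum principle forces $h$ to be a unimodular constant; this happens, for example, with $g=(1+u)(1+v)$, where $h\equiv 1$. In that case $|h'(0)|\le 1-|h(0)|^2$ holds trivially as $0\le 0$, so your inequality is valid in all cases.
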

\begin{fact}[\textbf{Closure under limits}]
Consider a sequence of tensors $\{\psi_j\}_{j\geq 1}$ such that $\psi_j\in LY_n(\rr)$ for each $j\geq 1$. If the sequence converges to a nonzero tensor $\psi$, then $\psi\in LY_n(\rr)$.
\label{fact:seq}
\end{fact}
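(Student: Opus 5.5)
The plan is to reduce the statement to Hurwitz's theorem from one-variable complex analysis. Convergence $\psi_j\to\psi$ of tensors in $(\CC^2)^{\otimes n}$ is convergence of finitely many coefficients. The generating polynomials therefore satisfy
\[
|f_{\psi_j}(z)-f_\psi(z)|\le \sum_x |(\psi_j)_x-\psi_x|\,|z^x|\to 0,
\]
uniformly on every compact subset of $\CC^n$. The case $n=0$ is immediate, since $\psi\ne 0$ is exactly the Lee-Yang condition for a scalar. So assume $n\ge 1$. Because the monomials $z^x$ are linearly independent and $\psi\neq 0$, the polynomial $f_\psi$ is not identically zero.

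Suppose, for contradiction, that $f_\psi(w)=0$ for some $w\in\odisk{\rr}$. I first choose a complex line through $w$ on which $f_\psi$ does not vanish identically. Such a line exists: if $t\mapsto f_\psi(w+tv)$ vanished identically in $t$ for every $v\in\CC^n$, then $f_\psi$ would vanish on all of $\CC^n$, which is impossible. Fix such a $v$ and set
\[
g(t)=f_\psi(w+tv), \qquad g_j(t)=f_{\psi_j}(w+tv).
\]
These are one-variable polynomials, and $g\not\equiv 0$ with $g(0)=0$. The polydisk $\odisk{\rr}$ is open, so there is $\rho_0>0$ with $w+tv\in\odisk{\rr}$ for all $|t|\le\rho_0$. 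The zeros of $g$ are isolated, so I can pick $0<\rho\le\rho_0$ such that $g$ has no zeros on the circle $|t|=\rho$. Then $m:=\min_{|t|=\rho}|g(t)|>0$.

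Since $g_j\to g$ uniformly on the closed disk $|t|\le\rho$, for large $j$ we have $|g_j-g|<m\le|g|$ on $|t|=\rho$. By Rouché's theorem, $g_j$ then has the same number of zeros as $g$ in $|t|<\rho$, counted with multiplicity. This number is at least one, because $g(0)=0$. Hence for large $j$ there is $t_j$ with $|t_j|<\rho$ and $f_{\psi_j}(w+t_jv)=0$. But $w+t_jv\in\odisk{\rr}$, which contradicts $\psi_j\in LY_n(\rr)$. Therefore $f_\psi$ has no zeros in $\odisk{\rr}$, that is, $\psi\in LY_n(\rr)$.

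The only step needing care is the choice of the line $w+tv$ on which the limit is not identically zero. This is what turns the several-variable statement into the one-variable Hurwitz/Rouché argument, and it is also the only place where the hypothesis $\psi\ne 0$ is used. Without it the conclusion fails, for instance when $\psi_j=\psi_0/j\to 0$.
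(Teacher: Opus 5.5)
Your proof is correct and matches the standard route: the paper does not prove this fact itself but defers to \cite{wong2026lee}, and the natural argument there is Hurwitz's theorem. You restrict to a complex line through the putative zero on which $f_\psi\not\equiv 0$ and apply Rouch\'e, which is a valid self-contained proof of the multivariable Hurwitz statement (a locally uniform limit of zero-free polynomials on the connected polydisk is either zero-free or identically zero) that this closure property rests on.
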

Given a real number $r>0$, let
$LY(r)$ be the set of  all Lee-Yang tensors  with   radius at least $r$
for each variable:
\[
LY(r)=\bigcup_{n\ge 0} \; LY_n((r,r,\ldots,r)).
\]
Suppose $A$ is an $n$-qubit operator. The  flattening of $A$ defines a tensor $\psi  \in (\CC^2)^{\otimes 2n}$
with  entries $\psi_{xy} = \la x|A|y\ra$, where $x,y\in \{0,1\}^n$.
We say that $A\in LY(r)$ if the flattening of $A$ is contained in $LY(r)$.
Finally, 
Suzuki and Fisher proved the following~\cite{sf71}.
\begin{fact}[\bf Gibbs states]
Consider any Hamiltonian $H\in \mathsf{SF}(n)$.  Then the Gibbs state $e^{-\beta H} /\mathrm{Tr}(e^{-\beta H})$
is contained in $LY(1)$ for all $\beta \ge 0$.
\label{fact:sf71}
\end{fact}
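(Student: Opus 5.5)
The plan is to build $e^{-\beta H}$ out of local pieces by a Trotter product. Since the flattening of a product $AB$ is obtained from the tensor product of the flattenings of $A$ and $B$ by contracting each column index of $A$ with the matching row index of $B$, Fact~\ref{fact:contraction} will push the Lee-Yang property through the product, and Fact~\ref{fact:seq} will handle the limit. The normalization $1/\mathrm{Tr}(e^{-\beta H})$ is a positive scalar, so it is harmless because $LY_n(\rr)$ is closed under nonzero scalars.

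Contraction needs the strict condition $r_ir_j>1$, so radius exactly $1$ is not enough at the intermediate stages. I would therefore first perturb $H$ into the interior of $\mathsf{SF}(n)$: replace $J^z_{i,j}$ by $J^z_{i,j}+\eta$ and $\mu^z_i$ by $\mu^z_i+\eta$ with $\eta>0$. For the perturbed model I aim to show that every local factor is zero-free on the closed unit polydisk, and hence, by compactness, on an open polydisk of radius $1+\eta'$ for some $\eta'>0$. The local factors are $e^{-\tau(\mu^x_iX_i+\mu^y_iY_i+\mu^z_iZ_i)}$ acting on a single qubit (a $2$-index tensor) and $e^{\tau H_{i,j}}$ acting on two qubits (a $4$-index tensor), with $\tau=\beta/m$.

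With factors in $LY(1+\eta')$, the Trotter product
\[
\Big(\prod_k e^{-\tau h_k}\Big)^m
\]
is a sequence of tensor products and contractions. Each contraction pairs indices of radius $1+\eta'$, so $r_ir_j=(1+\eta')^2>1$. The surviving indices keep radius $1+\eta'\ge 1$, so the product lies in $LY(1)$. Qubits not touched by a factor carry identity tensors $1+z w$, which are also in $LY(1+\eta')$ for $\eta'<\ldots$; here I must check that $|zw|<1$ still holds. If it fails, I would instead absorb the identities into neighboring factors or use the scaling trick $A\mapsto DAD^{-1}$ with $D=\mathrm{diag}(1,s)^{\otimes n}$, which trades radius between row and column indices.

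Next I send $m\to\infty$. The Trotter products converge to $e^{-\beta H_\eta}$, which is positive definite and hence nonzero, so Fact~\ref{fact:seq} places it in $LY(1)$. Then I send $\eta\to 0$ and apply Fact~\ref{fact:seq} once more.

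The main obstacle is the local two-qubit computation. The operator $H_{i,j}$ preserves parity, so it splits into blocks on $\mathrm{span}\{|00\ra,|11\ra\}$ and $\mathrm{span}\{|01\ra,|10\ra\}$. On each block, $e^{\tau H_{i,j}}$ equals $e^{\pm\tau J^z}$ times an explicit $2\times 2$ exponential whose off-diagonal entries are controlled by the norm in Eq.~(\ref{eq:ferronorm}). I would write down the resulting multiaffine polynomial in the four variables $(z_i,z_j,w_i,w_j)$. The goal is to show that, for $|z|,|w|\le 1$, the dominance $\|J^{\perp}\|<J^z$ forces the diagonal terms to dominate strictly. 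I would do this by fixing three variables and applying a M\"obius/Asano-type argument in the last one. The single-qubit factor with $\mu^z>0$ should be an easier version of the same computation.

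If the direct bound proves too messy, I would fall back on Suzuki and Fisher's original route. That route first reduces to the $XXZ$-type case via a local unitary $e^{i\theta Z}$, which only rotates the variables by phases and so preserves the polydisk. It then verifies the two-qubit polynomial in that reduced case.
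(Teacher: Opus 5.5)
\textbf{There is a genuine gap in how you get the slack that Fact~\ref{fact:contraction} requires.} Your plan is to shift $J^z_{i,j}$ and $\mu^z_i$ by $\eta$ and then show that every local factor is zero-free on the \emph{closed} unit polydisk. For the two-qubit factors this is false, whatever $\eta$ is.

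After the $Z$-rotation, $H_{i,j}=J^zZ_iZ_j+J^{xx}X_iX_j+J^{yy}Y_iY_j$ commutes with $X_iX_j$. Order the variables as $(z_i,z_j,w_i,w_j)$. Then
\[
f_{e^{\tau H_{i,j}}}(1,1,1,-1)=4\,\la {+}{+}|\,e^{\tau H_{i,j}}\,|{+}{-}\ra=0,
\]
because $|{+}{+}\ra$ and $|{+}{-}\ra$ lie in different eigenspaces of $X_iX_j$. For pure $ZZ$ coupling this is visible directly: $e^{\tau J}(1+uv)+e^{-\tau J}(u+v)$ vanishes at $u=z_iw_i=1$, $v=z_jw_j=-1$. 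With $XY$ terms present, the edge-dependent rotation only moves this zero to another point of the unit torus.

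Consequences:
\begin{itemize}
\item A two-qubit factor is at best in $LY(1)$ and never in $LY(1+\eta')$.
\item Your compactness step therefore has nothing to act on.
\item The ``strict dominance'' computation you plan for $|z|,|w|\le 1$ cannot succeed.
\end{itemize}
The scaling trick does not rescue this. Writing $A_1A_2=(A_1D)(D^{-1}A_2)$ divides the column radius of $A_1$ by $s$ and multiplies the row radius of $A_2$ by $s$, so every contracted product $r_ir_j$ is unchanged. Also $DID^{-1}=I$, so the trick cannot help with the identity tensors either; note that $1+zw$ lies in no $LY(1+\eta')$.

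\textbf{The missing idea is to take the slack from the $Z$-field by rescaling, as in the proof of Theorem~\ref{thm:main}, not from compactness.}
\begin{itemize}
\item Keep only your shift of the fields, so $\mu^z_i\ge\mu>0$; the $J^z$ shift is unnecessary.
\item Write $H=H_0-M$ with $M=\mu\sum_i Z_i$, and let $h_1,\dots,h_L$ be the local terms of $H_0$.
\item Use the Trotter factors $e^{\epsilon M}e^{-\tau h_k}e^{\epsilon M}$ with $\epsilon=\tau/(2L)$. Their ordered product is $I-\tau H+O(\tau^2)$, so its $m$-th power with $\tau=\beta/m$ tends to $e^{-\beta H}$.
\item Conjugation by $e^{\epsilon M}$ rescales every variable by $e^{-2\epsilon\mu}$. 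Untouched qubits now carry $e^{2\epsilon\mu Z}$ instead of $I$.
\item Hence each factor lies in $LY(e^{2\epsilon\mu})$ as soon as $e^{-\tau h_k}\in LY(1)$. Every contraction then has $r_ir_j=e^{4\epsilon\mu}>1$, and two applications of Fact~\ref{fact:seq} (first $m\to\infty$, then $\eta\to0$) finish the argument.
\end{itemize}
So the local input you actually need is the open-polydisk statement: $e^{\tau H_{i,j}}\in LY(1)$ and $e^{\tau(\mu^xX+\mu^yY+\mu^zZ)}\in LY(1)$ for $\mu^z\ge0$. The paper itself does not re-derive this. It imports the $J^{xy}=J^{yx}=0$ case from \cite{sf71,wong2026lee} and reduces to it by edge-dependent $Z$-rotations, as in your fallback.

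Also fix a sign. Since $-\beta H$ contains $+\beta\mu^z_iZ_i$, the single-qubit factor is $e^{+\tau(\mu^xX+\mu^yY+\mu^zZ)}$. With your sign, $e^{-\tau\mu^zZ}$ has generating polynomial $e^{-\tau\mu^z}(1+e^{2\tau\mu^z}zw)$, which vanishes inside the unit bidisk.
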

A proof of all the above facts can be found in~\cite{wong2026lee}.

\noindent
{\em Comment:}
Here we use a slightly more general definition of the Suzuki-Fisher models than the one in \cite{wong2026lee} which specialized to the case $J_{i,j}^{xy}=J_{i,j}^{yx}=0$ for simplicity. However, Fact 3 for the more general definition used here follows directly from its specialization to the case $J_{i,j}^{xy}=J_{i,j}^{yx}=0$, stated as Fact 1 in \cite{wong2026lee}. Indeed, Eq.~(\ref{eq:ferronorm}) ensures that 
for every edge $(i,j)\in E$ 
one can choose a pair of edge-dependent single-qubit $Z$-rotations $R_i$, $R_j$ such that 
$H_{i,j} = (R_i R_j) H_{i,j}'(R_i R_j)^\dag$, where $H'_{i,j}$ is of the form Eq.~\eqref{SFmodel2} with $J_{i,j}^{xy}=J_{i,j}^{yx}=0$. Since single-qubit $Z$-rotations preserve $LY(1)$, the arguments used in~\cite{sf71,wong2026lee} to prove the inclusion $e^{-\beta H}\in LY(1)$ are unchanged.

\section{Spectral gap from Lee-Yang radius}
\label{sec:relative}

The gap bound stated in Theorem~\ref{thm:main} is a  corollary of the following relationship between the relative spectral gap of a positive semidefinite operator and its Lee-Yang radius. This is a quantitative strengthening of Theorem 4 from Ref.~\cite{wong2026lee}.
\begin{theorem}[\bf Relative spectral gap]
\label{thm:relative_gap}
Suppose $r>1$ and $K\in LY(r)$ is a positive semidefinite $n$-qubit
operator.  Let $\lambda_1\ge \lambda_2\ge \cdots\ge \lambda_{2^n}$ be the eigenvalues of $K$.
Then the largest eigenvalue $\lambda_1$ is positive and non-degenerate. Furthermore,
\be
\label{relative_gap_bound}
\frac{\lambda_2}{\lambda_1} \le \frac1{r^2}.
\ee
\end{theorem}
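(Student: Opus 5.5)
The plan is to extract spectral information from $K$ by taking matrix powers, which Fact~\ref{fact:contraction} and Fact~\ref{fact:seq} allow us to do without leaving $LY(r)$.

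\textbf{Step 1: closure under products.} If $A,B\in LY(r)$ are $n$-qubit operators, then the flattening of $AB$ is obtained from $A\otimes B$ by contracting the $n$ column indices of $A$ with the $n$ row indices of $B$. Each contracted pair has radii $r\cdot r>1$, so Fact~\ref{fact:contraction} applies repeatedly. Hence $AB\in LY(r)$, provided the result is nonzero, and in particular $K^m\in LY(r)$ for all $m\ge 1$.

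Note also that $K\neq 0$, since a Lee-Yang tensor has $f_K(0)=\langle 0^n|K|0^n\rangle\ne 0$. Since $K\succeq 0$, this gives $\lambda_1>0$.

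Conjugation by $D_c=\mathrm{diag}(1,c)^{\otimes n}$ rescales variables: the flattening polynomial of $D_cKD_c$ is $f_K(cz,cw)$. Hence $D_cKD_c\in LY(r/c)$ for $c>0$, and similarly $KD_c\in LY$ with radii $r$ on rows and $r/c$ on columns. Products such as $(KD_c)^m$ therefore remain Lee-Yang as long as each contracted pair satisfies $r\cdot r/c>1$, that is, $c<r^2$. This is the mechanism through which the exponent $r^2$ should enter.

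\textbf{Step 2: non-degeneracy.} Write $K=\sum_k\lambda_k P_k$. As $m\to\infty$, $K^m/\lambda_1^m\to P$, where $P$ is the spectral projector onto the top eigenspace. By Fact~\ref{fact:seq}, $P\in LY(r)$.

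Suppose $P$ had rank at least $2$. I would choose a product vector $|\phi(w)\rangle=\bigotimes_a(|0\rangle+w_a|1\rangle)$ and contract $P$ on one side with $\langle\phi(\bar z)|$, keeping the other side. Then I would exhibit two orthogonal top eigenvectors $u,v$ in the range of $P$ for which the scalar polynomial
\[
f_P(z,w)=\sum_k \overline{u_k(\bar z)}\,u_k(w)
\]
vanishes at a point with $|z_a|,|w_a|<r$. Here $u_k$ runs over an orthonormal basis of the range of $P$.

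Concretely, I expect to use the closure facts to contract away all but one qubit on each side. This should reduce the problem to a $2\times 2$ (one-qubit) PSD operator in $LY(r)$ that has rank $2$ and is proportional to a projector, i.e. proportional to the identity $I$. The flattening of $I$ is $1+zw$, which vanishes at $zw=-1$, inside the $r$-bidisk because $r>1$. This would give the contradiction.

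\textbf{Step 3: the quantitative bound.} Apply the same limiting argument to $(KD_c)^m$, or more symmetrically to $(K^{1/2}D_cK^{1/2})$-type products realized as $K D_c K D_c\cdots K$, with any $c<r^2$. The top eigenvalue of $D_c^{1/2}KD_c^{1/2}$ is then non-degenerate by Step 2.

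For $c>1$, however, the compression of $D_c$ onto the span of the top two eigenvectors of $K$ can be tuned so that the two leading eigenvalues of $K^{1/2}D_cK^{1/2}$ cross, unless $c\lambda_2\le\lambda_1$. I would first try to show this by a continuity and interlacing argument in $c$. Take $c$ from $1$ up to $r^2$, and argue that if $\lambda_2/\lambda_1>1/r^2$, then at some admissible $c$ a degeneracy is forced. Letting $c\uparrow r^2$ then yields $\lambda_2/\lambda_1\le 1/r^2$.

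\textbf{Main obstacle.} I expect the hardest part to be precisely this forced-crossing step. $D_c$ does not commute with $K$, so I will need a cleverer choice: for instance, replacing $D_c$ by a suitably rotated diagonal operator, or averaging over single-qubit $Z$-rotations, which preserve $LY$. The aim is to make $\lambda_1$ and $c\lambda_2$ genuinely compete.

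The reduction in Step 2 to a one-qubit identity-like operator also needs care, since contracting may produce the zero tensor. To handle this I would use the freedom in the product test vectors together with Fact~\ref{fact:seq}.
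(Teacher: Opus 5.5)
Your proposal has a genuine gap. Step~3, which carries the quantitative bound, cannot work as designed, and the one-qubit reduction in Step~2 is never actually supplied.

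\textbf{Step~3.} The only consequence of the Lee--Yang property Step~3 uses is that $D_c^{1/2}KD_c^{1/2}\in LY(r/\sqrt{c})$ has a simple top eigenvalue for every $0<c<r^2$. You then hope that $\lambda_2/\lambda_1>r^{-2}$ would force the top two eigenvalues to cross at some admissible $c$. That input carries no quantitative information. Take any positive definite $2\times 2$ matrix $K$ with $K_{01}\neq 0$, for instance $K=I+\epsilon X$. Then $D_c^{1/2}KD_c^{1/2}$ has off-diagonal entry $\sqrt{c}\,K_{01}\neq 0$, so it has two distinct eigenvalues for every $c>0$, however close $\lambda_2/\lambda_1$ is to $1$. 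In this setting a crossing occurs only when $K$ is diagonal, which is exactly the tight example. Generically, eigenvalues of a one-parameter Hermitian family repel rather than cross.

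Your suggested remedies do not supply the missing input. Conjugation by $Z$-rotations is a unitary equivalence that commutes with $D_c$, so it cannot create a crossing. Averaging over them is not one of the closure operations in Facts~\ref{fact:contraction} and~\ref{fact:seq}. Conjugation by non-diagonal rotations does not preserve $LY(r)$: Hadamard takes $|0\rangle\langle 0|$ to $|+\rangle\langle +|$, whose flattening vanishes at $z=-1$.

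\textbf{Step~2.} Compressing $P$ with admissible product vectors, or taking partial traces, does give a nonzero one-qubit PSD operator in $LY(r)$. Nothing forces it to be proportional to $I$. You would need $P(|0\rangle\otimes|\phi\rangle)\perp P(|1\rangle\otimes|\phi\rangle)$ with equal norms for some admissible product $\phi$, and no such $\phi$ is guaranteed. A reduced operator with two distinct eigenvalues gives no contradiction.

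\textbf{The missing idea.} The paper turns zero-freeness into a ratio bound by working with vectors and a holomorphic ratio function, not by deforming $K$. Your $P\in LY(r)$ already gives the starting point. The vector $\psi=P|0^n\rangle$ is a top eigenvector, nonzero because $\langle 0^n|P|0^n\rangle=f_P(0)\neq 0$, and $\psi\in LY(r)$ by contraction.

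Take any eigenvector $\phi$ linearly independent of $\psi$ with $\alpha=\lambda_\phi/\lambda_1>0$. The ratio $g=f_\phi/f_\psi$ is then holomorphic and nonconstant on $\DD_r^n$. Since $K$ maps $LY(s)$ into $LY(r)$ whenever $rs>1$, apply $K$ to $\psi+\tau\phi$ with $\tau=-1/(\alpha g(v))$. This shows $\alpha g(v)\in g(\DD_s^n)$ whenever $g(v)\neq 0$: otherwise $\psi+\tau\phi\in LY(s)$ while $f_{K(\psi+\tau\phi)}(v)=0$.

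This squeezing gives $\alpha\,\mathrm{diam}\,g(\overline{\DD_t^n})\le\mathrm{diam}\,g(\overline{\DD_s^n})$. The Schwarz lemma, applied to differences of $g$ along complex lines through the origin, gives $\mathrm{diam}\,g(\overline{\DD_s^n})\le (s/t)\,\mathrm{diam}\,g(\overline{\DD_t^n})$ for $1/r<s<t<r$. Hence $\alpha\le s/t\to r^{-2}$. Taking $\phi$ in the top eigenspace, so that $\alpha=1$, yields non-degeneracy at no extra cost, and no one-qubit reduction is needed.
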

\noindent
As a consequence, the relative spectral gap of $K$ satisfies $(\lambda_1-\lambda_2)/\lambda_1 \ge 1-r^{-2}$.
Note the bound Eq.~(\ref{relative_gap_bound}) is tight. For example, suppose $n=1$ and 
$K=\lambda_1|0\ra\la 0| + \lambda_2|1\ra\la 1|$. The generating polynomial of the flattened tensor is
$f_K(z)= \lambda_1 ( 1 + \alpha z_1 z_2)$, where $\alpha=\lambda_2/\lambda_1$.
Choosing $z_1=z_2=i/\sqrt{\alpha}$ gives $f_K(z)=0$. Conversely, if $|z_1|,|z_2|<1/\sqrt{\alpha}$  then
$f_K(z)\ne 0$. Thus $K$ has Lee-Yang radius $r=1/\sqrt{\alpha}$. Equivalently, $\alpha = r^{-2}$,
saturating the bound Eq.~(\ref{relative_gap_bound}).

Let us first show that Theorem~\ref{thm:main} follows rather directly from Theorem \ref{thm:relative_gap}. 

\begin{proof}[\bf{Proof of Theorem \ref{thm:main}}]

Assume wlog that $\mu>0$ since otherwise there is nothing to prove.
Define a magnetization operator $M=\mu\sum_{i=1}^nZ_i$  and let $H_0=H+M$. Since the $Z$-fields of $H_0$ are
$\mu^z_i-\mu\ge0$, one has $H_0\in \mathsf{SF}(n)$ and
$e^{-\beta H_0}\in LY(1)$ for all $\beta \ge0$ due to Fact~\ref{fact:sf71}.
For each $\beta>0$ define an operator
\[
A_\beta  = e^{\beta M/2}\,e^{-\beta H_0}\,e^{\beta M/2}.
\]
Clearly, $A_\beta$ is a hermitian and positive definite operator. 
From $e^{-\beta H_0} \in LY(1)$ 
one gets
\(
A_\beta \in LY(e^{\beta \mu})
\)
since the generating polynomial of $A_\beta$ is obtained from the one of $e^{-\beta H_0}$
by rescaling each variable by $e^{-\beta \mu}$.
Then  the relative spectral gap theorem (Theorem~\ref{thm:relative_gap}) applies to $A_\beta$ with $r=e^{\beta \mu}>1$.
Let  $\lambda_1(\beta)>\lambda_2(\beta)\ge \cdots \ge \lambda_{2^n}(\beta)$ be the eigenvalues of $A_\beta$.
Theorem~\ref{thm:relative_gap} gives
\[
\frac{\lambda_2(\beta)}{\lambda_1(\beta)}\;\le\;e^{-2\beta \mu}.
\]
Equivalently,
\be
\label{eq:ratio}
-\frac1\beta \log\frac{\lambda_2(\beta)}{\lambda_1(\beta )}\;\ge\;2\mu
\ee
for all $\beta>0$.
On the other hand, expanding the three exponentials in $A_\beta$ one gets $
A_\beta =I-\beta H+O(\beta^2)$. By Weyl's inequality, 
\[
\lambda_1(\beta)=1-\beta E_1+O(\beta^2)\quad \mbox{and} \quad
\lambda_2(\beta)=1-\beta E_2+O(\beta^2).
\]
Therefore
$-\frac1\beta \log\frac{\lambda_2(\beta)}{\lambda_1(\beta)}=E_2-E_1+O(\beta)$.
Combining with Eq.~(\ref{eq:ratio}) and letting $\beta\to0$ gives
$E_2-E_1\ge2\mu$, as claimed in Theorem~\ref{thm:main}.
 Note that the above proof applies to any Hamiltonian $H$ such that $H_0=H+M$ obeys
$e^{-\beta H_0} \in LY(1)$ for all small enough $\beta\ge 0$.
\end{proof}

In the rest of this section we prove Theorem~\ref{thm:relative_gap}.
We begin with an informal sketch of the proof. Suppose for simplicity that $\lambda_1>\lambda_2>0$.
Define the spectral ratio $\alpha=\lambda_2/\lambda_1$ and the ratio function $g(z)=f_{\psi_2}(z)/f_{\psi_1}(z)$,
where $f_{\psi_j}(z)$ is the generating polynomial of the eigenvector $\psi_j$ such that $K\psi_j = \lambda_j \psi_j$.
(In the $n=1$ example considered above, $|\psi_1\ra=|0\ra$ and $|\psi_2\ra = |1\ra$.
Thus $f_{\psi_1}(z)=1$ is the constant function, $f_{\psi_2}(z)=z$, and thus $g(z)=z$.)
We will show that $\psi_1\in LY(r)$ so that $f_{\psi_1}(z)$ is non-vanishing on the polydisk $\DD_r^n$
and $g(z)$ is holomorphic on $\DD_r^n$.
Moreover, we show that 
 $g$ has a {\em squeezing property}:  after multiplication by $\alpha$, any nonzero value of $g(z)$ on a  polydisk
$\DD_t^n$
of radius $t\approx r$  must reappear on a smaller polydisk $\DD_s^n$  of radius $s\approx 1/r$. 
We then consider the images $g(\DD_s^n) \subseteq g(\DD_t^n)$ in the complex plane. 
Let $\Delta_s$ and $\Delta_t$ be the diameters of $g(\DD_s^n)$ and $g(\DD_t^n)$.
The squeezing property  gives a bound $\alpha \Delta_t \le \Delta_s$.
We use the Schwarz lemma from complex analysis to  get a complementary bound $\Delta_s \le (s/t) \Delta_t$.
Combining the two bounds gives $\alpha \le s/t$. Taking the limits $s\to 1/r$ (from above) and $t\to r$ (from below) proves
$\alpha\le r^{-2}$, which is the desired bound. 
The proof for the general case follows the same strategy except that we explicitly construct the maximum eigenvector
$\psi_1 \in LY(r)$ using the power method with the initial state $|0^n\ra$. Our argument then implies that 
the maximum eigenvalue $\lambda_1$ is non-degenerate.

\begin{proof}[\bf Proof of Theorem~\ref{thm:relative_gap}]
Below we use notation
\[
\DD_r^n=\{ z\in \CC^n \, : \, \|z\|_\infty<r\} \quad \mbox{and} \quad
\overline{\DD_r^n}=\{ z\in \CC^n \, : \, \|z\|_\infty\le r\}
\]
for the open and closed polydisks of radius $r$.
Pick an arbitrary orthonormal  eigenbasis of $K$ such that 
\[
K=\sum_{j=1}^{2^n} \lambda_j |\psi_j\ra\la \psi_j|
\]
and let 
\[
\lambda_*=\max\{\lambda_j \, : \, \la 0^n|\psi_j\ra \ne 0\}.
\]
We claim that $\lambda_*$ is well-defined and positive.
Indeed, the generating polynomial of $K$ evaluated at the all-zero vector
coincides with $\la 0^n|K|0^n\ra$. From  $K\in LY(r)$ one gets $\la 0^n|K|0^n\ra\ne 0$.
Since $K$ is positive semidefinite, one gets $\la 0^n|K|0^n\ra>0$. Thus 
at least one eigenvector of $K$ with a positive eigenvalue must have a non-zero overlap with $|0^n\ra$
proving $\lambda_*>0$. (We do not assume at this point that
$\lambda_*$ is the largest eigenvalue of $K$ but this will follow.)

Since $LY(r)$ is closed under tensor contractions, see Fact~\ref{fact:contraction},
one has $K^m\in LY(r)$ for every $m\ge1$.
Indeed, one can obtain $K^m$ from $K^{\otimes m}$ by contracting a subset
of tensor indices such that each contracted pair of indices
 has radii with product $r^2>1$.  
 Note that $|0^n\ra \in LY(R)$ for any $R>0$.  
 Choose $R$ such that $rR>1$.
Thus $K^m |0^n\ra \in LY(r)$ for any $m$ since $K^m|0^n\ra$ is obtained from
 $K^m \otimes |0^n\ra$ by contracting pairs of indices with the product of radii $rR>1$.
 Let 
\be
\label{max_eigenvector}
|\psi\ra = \lim_{m\to \infty}\frac{K^m |0^n\ra}{ \lambda_*^{m}}. 
\ee
Equivalently, $|\psi\ra = \Pi_* |0^n\ra$, where $\Pi_*$ is the projector 
onto the eigenspace of $K$ 
with the eigenvalue $\lambda_*$. By definition of $\lambda_*$ one has 
\be
\label{max_eigenvector_1}
\psi\ne 0 \quad \mbox{and} \quad 
K |\psi\ra = \lambda_*|\psi\ra.
\ee
Fact~\ref{fact:seq} and the inclusion $K^m|0^n\ra \in LY(r)$ for each $m$ then imply
\be
\label{max_eigenvector_2}
\psi \in LY(r).
\ee
Below we assume that $K$ has rank at least two, so that $\lambda_2>0$.
In the remaining case when $K$ is a rank-one operator, one has $\lambda_1>0$, $\lambda_2=0$
and the bound Eq.~(\ref{relative_gap_bound}) is trivial.

Let $\phi$ be any eigenvector of $K$ with a positive eigenvalue $\lambda_\phi>0$
such that $\phi$ is linearly independent from $\psi$.
Define the spectral ratio 
\[
\alpha=\frac{\lambda_\phi}{\lambda_*}.
\]
Our goal is to prove that 
\be
\label{alpha_upper}
\alpha \le \frac1{r^2}.
\ee
This immediately implies that $\lambda_*=\lambda_1$ is the largest eigenvalue of $K$.
Indeed, if $\lambda_1>\lambda_*$ then one can choose $\phi=\psi_1$
orthogonal to $\psi$ obtaining $\alpha=\lambda_1/\lambda_*>1$ which contradicts Eq.~(\ref{alpha_upper}).
This also implies that $\lambda_*$ is non-degenerate. Indeed, otherwise
one can make $\alpha=1$ by choosing $\phi$
as any vector from the $\lambda_*$-eigenspace linearly independent of $\psi$,
which contradicts Eq.~(\ref{alpha_upper}).
Choosing $\phi$ as an eigenvector with the eigenvalue $\lambda_2$ proves the claimed bound
$\lambda_2/\lambda_1\le r^{-2}$. 

It remains to prove Eq.~(\ref{alpha_upper}).
Define the ratio function $g\, : \, \DD_r^n \to \CC$ such that 
\[
g(z)=\frac{f_{\phi}(z)}{f_{\psi}(z)}.
\]
Note that $g(z)$ is a holomorphic function on $\DD_r^n$ since
$f_{\psi}(z)$ has no zeros in $\DD_r^n$ due to Eq.~(\ref{max_eigenvector_2}).
The function $g(z)$ is nonconstant since
$g(z)\equiv c$ would force $|\phi\ra=c|\psi\ra$, contradicting 
linear independence.  
\begin{lemma}[\bf Squeezing]
For every $s\in(1/r,\,r]$ and every $v\in\DD_r^n$
with $g(v)\ne0$ one has
\be
\label{squeezing}
  \alpha\,g(v)\in g(\DD_s^n).
\ee
Here $g(\DD_s^n)=\{ g(z) \, : \,  z\in \DD_s^n\}$.
\end{lemma}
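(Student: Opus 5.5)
The plan is to argue by contradiction, using closure under contraction (Fact~\ref{fact:contraction}) for the tensor $K\otimes\chi$, where $\chi$ is a suitable linear combination of $\phi$ and $\psi$.

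Fix $s\in(1/r,r]$ and $v\in\DD_r^n$ with $g(v)\neq 0$, and set $c=\alpha g(v)$. Define
\[
|\chi\ra = |\phi\ra - c\,|\psi\ra .
\]
Then $f_\chi(z)=f_\psi(z)\bigl(g(z)-c\bigr)$ on $\DD_r^n$. Since $f_\psi$ has no zeros on $\DD_r^n \supseteq \DD_s^n$, the claim $c\in g(\DD_s^n)$ is equivalent to $f_\chi$ having a zero in $\DD_s^n$. So suppose, for contradiction, that $f_\chi(z)\ne0$ for all $z\in\DD_s^n$. We have $\chi\neq0$ because $\phi$ and $\psi$ are linearly independent. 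Hence $\chi\in LY_n((s,\ldots,s))$.

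Next, apply $K$ to $\chi$. Using $K\phi=\lambda_\phi\phi$, $K\psi=\lambda_*\psi$ and $\alpha=\lambda_\phi/\lambda_*$, we get
\[
K|\chi\ra=\lambda_\phi|\phi\ra - c\lambda_*|\psi\ra=\lambda_\phi\bigl(|\phi\ra - g(v)|\psi\ra\bigr).
\]
This vector is nonzero, again by linear independence and since $\lambda_\phi>0$. Its generating polynomial is $\lambda_\phi f_\psi(z)\bigl(g(z)-g(v)\bigr)$, which vanishes at $z=v\in\DD_r^n$.

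On the other hand, $K|\chi\ra$ is obtained from the tensor product $K\otimes\chi$ by contracting each column index of $K$ with the corresponding index of $\chi$. By closure under tensor products, $K\otimes\chi\in LY_{3n}$ with radius $r$ on the indices of $K$ and $s$ on those of $\chi$. Each contracted pair has radii with product $rs>1$, so Fact~\ref{fact:contraction}, applied $n$ times, gives $K\chi\in LY_n((r,\ldots,r))$. This means $f_{K\chi}$ has no zeros in $\DD_r^n$, contradicting $f_{K\chi}(v)=0$. Therefore $f_\chi$ vanishes at some $z\in\DD_s^n$, i.e.\ $g(z)=\alpha g(v)$, which is \eqref{squeezing}.

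The main point requiring care is this contraction step. First, the flattening convention $\psi_{xy}=\la x|K|y\ra$ must match the contraction, so that summing over $y$ paired with $\chi_y$ really produces $(K\chi)_x$ with no complex conjugation. Second, each application of Fact~\ref{fact:contraction} must keep the intermediate tensor nonzero; this holds because the final tensor $K\chi$ is nonzero, and a contraction of a zero tensor would be zero. The hypothesis $g(v)\neq0$ is not needed in this argument. It only guarantees that the squeezed value $c$ is nonzero, which is presumably what the later diameter estimate uses.
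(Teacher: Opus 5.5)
Your proof is correct and is essentially the paper's argument: the paper uses $\varphi=\psi+\tau\phi$ with $\tau=-1/(\alpha g(v))$, which is exactly $\tau\chi$. It then derives the same contradiction from $K\varphi\in LY(r)$ (via contraction with $rs>1$) and $f_{K\varphi}(v)=0$. Because your normalization avoids dividing by $\alpha g(v)$, your remark is also correct that the hypothesis $g(v)\neq0$ is not needed: with $c=0$ the same argument shows $0\in g(\DD_s^n)$.
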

\begin{proof}
Pick any $v\in \DD_r^n$ with $g(v)\ne 0$.
Suppose Eq.~(\ref{squeezing}) is false, that is, $\alpha\,g(v) \notin g(\DD_s^n)$.
Let us show that this  leads to a contradiction.
Define
 \[
  \tau=-\frac1{\alpha g(v)} \quad \mbox{and}\quad
\ket{\varphi}=\ket{\psi}+\tau\ket{\phi}.
\]
We claim that $\varphi \in LY(s)$. Indeed,
for any  $\zeta \in \DD_s^n$ one has
\[
  f_\varphi(\zeta)=f_\psi(\zeta) + \tau f_\phi(\zeta)=
  f_{\psi}(\zeta)\bigl(1+\tau\,g(\zeta)\bigr)\ne0,
\]
since $f_{\psi}(\zeta)\ne0$ for $\zeta \in \DD_s^n\subseteq \DD_r^n$  and a zero of the second factor would give
$g(\zeta)=-1/\tau=\alpha g(v)$, which we assumed is not a value of $g$
on $\DD_s^n$.  Thus $\varphi \in LY(s)$.
Because $K\in LY(r)$ and  $rs>1$, the 
closure under contractions  of Fact~\ref{fact:contraction}
gives $K\varphi  \in LY(r)$. On the other hand,
\[
K\ket{\varphi}= \lambda_* |\psi\ra + \lambda_\phi \tau |\phi\ra=
\lambda_*\bigl(\ket{\psi}+\alpha \tau\ket{\phi}\bigr).
\]
Thus the generating polynomial of $K\varphi$ vanishes at $v$, 
\[
  f_{K\varphi}(v)
  =\lambda_*\,f_{\psi}(v)\bigl(1+\tau \alpha \,g(v)\bigr)
  =\lambda_*\,f_{\psi}(v)\,(1-1)=0.
\]
Since $v\in\DD_r^n$, this is  a contradiction to $K\varphi  \in LY(r)$.
  This proves the lemma.
\end{proof}
Fix real numbers $s$ and $t$  such that
\[
1/r<s<t<r.
\]
We shall eventually take the limits
 $s\downarrow1/r$ and $t\uparrow r$.
Given a compact set $M\subset \CC$,
let $\operatorname{diam}(M)=\max_{z,z' \in M} |z-z'|$ be the diameter of $M$.
Define
\[
\Delta_s=\operatorname{diam}(g(\overline{\DD_s^n})) \quad \mbox{and} \quad \Delta_t=\operatorname{diam}(g(\overline{\DD_t^n})).
\]
It follows directly from the definitions that
\be
\label{Delta_claim0}
0<\Delta_s \le \Delta_t <\infty.
\ee
Indeed, from  $s<t$ one gets $\overline{\DD_s^n} \subset \overline{\DD_t^n}$ and thus $\Delta_s\le \Delta_t$.
We have $\Delta_s>0$ since $g$ is a nonconstant holomorphic function on $\DD_r^n$
(by the identity theorem, $g$ is nonconstant on $\overline{\DD_s^n}$ as well).
Finally, $\Delta_t<\infty$ since $g$ is continuous on the compact set $\overline{\DD_t^n}\subset\DD_r^n$.

We claim that
\be
\label{Delta_claim1}
  \alpha\,\Delta_t \le \Delta_s.
\ee
Indeed, the zero set of $g$ has empty interior, so its
complement is dense in $\overline{\DD_t^n}$ and, by continuity, removing the value
$0$ from $g(\overline{\DD_t^n})$ does not change the diameter of  $g(\overline{\DD_t^n})$.
For any $z,z'\in \overline{\DD_t^n}$  with $g(z)\ne 0$ and $g(z')\ne 0$
the squeezing lemma gives $\alpha g(z)=g(v)$ and $\alpha g(z')=g(v')$
for some $v,v'\in \DD_s^n$. Thus
\[
\alpha |g(z)-g(z')| = |g(v)-g(v')| \le  \max_{v,v'\in \overline{\DD_s^n}} |g(v)-g(v')| = \Delta_s .
\]
Taking the supremum over all such $z,z'$ implies Eq.~(\ref{Delta_claim1}).

Next  we claim that
\be
\label{Delta_claim2}
\Delta_s\le \frac{s}{t}\Delta_t.
\ee
Indeed, we have
\[
\Delta_s=|g(z)-g(w)| \quad \mbox{for some $z,w \in \overline{\DD_s^n}$}.
\]
Let
\be
\label{rho_def}
\rho=\max\{\|z\|_\infty,\|w\|_\infty\}\le s.
\ee
Note that $\rho>0$ since otherwise  $\Delta_s=0$, contradicting Eq.~(\ref{Delta_claim0}).
Let $\DD_1=\{ \zeta \in \CC \, : \, |\zeta|<1\}$  be the open unit disk and $F\, : \, \DD_1\to \CC$ be a function defined as
\be
\label{F_from_g}
  F(\zeta)=\frac1{\Delta_t}\left( g(\rho^{-1} zt \zeta) - g(\rho^{-1} wt \zeta)\right)
\ee
for $\zeta \in \DD_1$.
We shall use the following result from complex analysis.
\begin{fact}[\bf Schwarz lemma]
Suppose $F \, : \, \DD_1 \to \CC$ is a holomorphic function such that $F(0)=0$ and $|F(\zeta)|\le 1$
for all $\zeta\in \DD_1$. Then $|F(\zeta)|\le |\zeta|$ for all $\zeta\in \DD_1$.
\end{fact}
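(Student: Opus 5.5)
The plan is the classical argument: divide $F$ by $\zeta$ and apply the maximum modulus principle on disks of radius slightly less than $1$. Nothing from the Lee-Yang machinery is needed; only basic one-variable complex analysis is used.

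\textbf{Step 1 (removable singularity).} Since $F$ is holomorphic on $\DD_1$ with $F(0)=0$, its Taylor series about $0$ is $F(\zeta)=\sum_{k\ge1} a_k\zeta^k$, converging on $\DD_1$. Define
\[
h(\zeta)=\sum_{k\ge1}a_k\zeta^{k-1}.
\]
This series has the same radius of convergence, so $h$ is holomorphic on $\DD_1$. It satisfies $h(\zeta)=F(\zeta)/\zeta$ for $\zeta\ne0$ and $h(0)=F'(0)$. Equivalently, one can invoke Riemann's removable singularity theorem for $F(\zeta)/\zeta$, which is bounded near $0$.

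\textbf{Step 2 (maximum modulus on a smaller circle).} Fix $\rho\in(0,1)$. On the circle $|\zeta|=\rho$ we have $|h(\zeta)|=|F(\zeta)|/\rho\le 1/\rho$, using the hypothesis $|F|\le1$. Since $h$ is holomorphic on a neighborhood of the closed disk $\{|\zeta|\le\rho\}$, the maximum modulus principle gives $|h(\zeta)|\le1/\rho$ for all $|\zeta|\le\rho$.

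\textbf{Step 3 (limit).} Fix any $\zeta\in\DD_1$. For every $\rho\in(|\zeta|,1)$, Step 2 gives $|h(\zeta)|\le 1/\rho$. Letting $\rho\uparrow1$ yields $|h(\zeta)|\le1$. Hence $|F(\zeta)|=|\zeta|\,|h(\zeta)|\le|\zeta|$ for $\zeta\ne0$, and the inequality is trivial at $\zeta=0$ because $F(0)=0$.

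\textbf{Main obstacle.} The one subtle point is that the maximum modulus principle cannot be applied on the open disk $\DD_1$ itself, since $h$ need not extend continuously to the boundary. The shrinking radius $\rho$ in Step 2 together with the limit in Step 3 handles exactly this. The remaining steps are routine.

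Note on the intended application: we will have $|F|\le1$ because $\rho^{-1}zt\zeta$ and $\rho^{-1}wt\zeta$ lie in $\overline{\DD_t^n}$ for $\zeta\in\DD_1$, and $F(0)=0$ holds trivially. So the Fact applies as stated.
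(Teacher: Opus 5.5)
Your proof is correct. It is the standard textbook argument: remove the singularity of $F(\zeta)/\zeta$ at the origin, apply the maximum modulus principle on the circles $|\zeta|=\rho<1$, and let $\rho\uparrow 1$, which correctly avoids any assumption about boundary behavior. The paper gives no proof of its own; it quotes the Schwarz lemma as a standard fact from complex analysis, so there is no alternative route to compare against.
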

The function $F(\zeta)$ defined in Eq.~(\ref{F_from_g}) is holomorphic on $\DD_1$ since
$\rho^{-1} zt \zeta \in \DD_t^n$, $\rho^{-1} wt \zeta \in \DD_t^n$ and $g$ is holomorphic on $\DD_t^n$.
Obviously,  $F(0)=0$. Furthermore, for any $\zeta \in \DD_1$ one has $|F(\zeta)|\le 1$ since
$g(\rho^{-1} zt \zeta)\in g(\DD_t^n)$ and $g(\rho^{-1} wt \zeta)\in g(\DD_t^n)$, while $\Delta_t$ is the diameter
of $g(\overline{\DD_t^n})$. Thus
\[
\Delta_s=|g(z)-g(w)|= \Delta_t |F(\rho t^{-1})| \le \Delta_t \rho t^{-1} \le \frac{s}{t} \Delta_t,
\]
where the first inequality follows from the Schwarz lemma and the second inequality follows from Eq.~(\ref{rho_def}).
This
proves Eq.~(\ref{Delta_claim2}).

Combining Eqs.~(\ref{Delta_claim1},\ref{Delta_claim2}) gives
$\alpha \Delta_t \le \Delta_s\le (s/t)\Delta_t$. From Eq.~(\ref{Delta_claim0}) one gets $\Delta_t\ne 0$.
Thus $\alpha \le s/t$ for all $s,t$ satisfying $1/r <s<t<r$.
Letting $s\downarrow1/r$ and $t\uparrow r$ yields
$\alpha\le r^{-2}$.
\end{proof}

\paragraph{Acknowledgements}
DG, YL, and BW acknowledge the support of the Natural Sciences and Engineering Research Council of Canada. We acknowledge the use of AI tools to develop proof strategies. 

\paragraph{Note added.} After completing this work, we became aware of independent work by
Rayudu and Takahashi~\cite{rayudu2026spectralgapleeyanghamiltonians}, which proves a gap lower bound $\mu/4$ for Lee--Yang Hamiltonians in a uniform
$Z$-field of strength $\mu>0$. Their proof uses decay of imaginary-time
correlations, whereas ours uses the relative spectral gap of
Lee--Yang operators and yields the tight lower bound $2\mu$ for
the Suzuki--Fisher class considered here. Both works independently
imply a polynomial-time quantum algorithm for the ground energy of
bipartite Quantum MaxCut.


\end{document}